\documentclass[letterpaper,11pt]{article}

\usepackage{makeidx}  % allows for indexgeneration

\usepackage{amsthm}
\usepackage[margin=1in,dvips]{geometry}

\usepackage{graphicx}
\usepackage{amssymb}
\usepackage{amsmath}
\usepackage{float}
\usepackage{enumerate}
\usepackage{url}
\usepackage{subfig}
\usepackage[hidelinks]{hyperref}

\newtheorem{theorem}{Theorem}[section]
\newtheorem{lemma}[theorem]{Lemma}

\newtheorem{definition}[theorem]{Definition}

\newcommand{\abs}[1]{\left|#1\right|}

% Code for theorems in multiple places
% Usage:
%
%  \define[optional title]{thm:main}{Theorem}{We show \[ x^2 + y^2 \geq 2xy \].}
%     Before you use it.  
%
%  \state{thm:main}
%     in the place where you want the label to be generated
%
%  \restate{thm:main}
%    elsewhere.
%%%%% Begin code
\usepackage{etoolbox}

\makeatletter
\newcommand{\define}[4][ignore]{%
  \ifstrequal{#1}{ignore}{}{
  \@namedef{thmtitle@#2}{#1}}%
  \@namedef{thm@#2}{#4}%
  \@namedef{thmtypen@#2}{lemma}%
  \newtheorem{thmtype@#2}[theorem]{#3}%
  \newtheorem*{thmtypealt@#2}{#3~\ref{#2}}%
}
\newcommand{\state}[1]{%
  \@ifundefined{thmtitle@#1}{
  \begin{thmtype@#1}
    }{
  \begin{thmtype@#1}[\@nameuse{thmtitle@#1}]
  }
    \label{#1}
    \@nameuse{thm@#1}
  \end{thmtype@#1}
}
\newcommand{\restate}[1]{%
  \@ifundefined{thmtitle@#1}{
    \begin{thmtypealt@#1}
    }{
  \begin{thmtypealt@#1}[\@nameuse{thmtitle@#1}]
  }
    \@nameuse{thm@#1}
  \end{thmtypealt@#1}
}
\makeatother
%%%%% End code

% (Copied from Erik Demaine)
% Complex \xxx for making notes of things to do.  Use \xxx{...} for general
% notes, and \xxx[who]{...} if you want to blame someone in particular.
% Puts text in brackets and in bold font, and normally adds a marginpar
% with the text ``xxx'' so that it is easy to find.  On the other hand, if
% the comment is in a minipage, figure, or caption, the xxx goes in the text,
% because marginpars are not possible in these situations.
{\makeatletter
 \gdef\xxxmark{%
   \expandafter\ifx\csname @mpargs\endcsname\relax % in minipage?
     \expandafter\ifx\csname @captype\endcsname\relax % in figure/caption?
       \marginpar{xxx}% not in a caption or minipage, can use marginpar
     \else
       xxx % notice trailing space
     \fi
   \else
     xxx % notice trailing space
   \fi}
 \gdef\xxx{\@ifnextchar[\xxx@lab\xxx@nolab}
 \long\gdef\xxx@lab[#1]#2{{\bf [\xxxmark #2 ---{\sc #1}]}}
 \long\gdef\xxx@nolab#1{{\bf [\xxxmark #1]}}
 % This turns them off:
 \long\gdef\xxx@lab[#1]#2{}\long\gdef\xxx@nolab#1{}%
}

\DeclareMathOperator{\concat}{\|}

\def\R{\mathbb{R}}

\def\eps{\epsilon}

% I want a proof that ends with displaymath and doesn't have the
% extra line before the qedsymbol.
%
% Using \qedhere is the "right" solution, but it generates a warning
% and gives the impression that it may not be totally portable.
%
% Here's a backup strategy:
%    - use this proof environment, which suppresses the end of proof symbol
%    - use \tag*{\qedsymbol} in the final equation*
% This isn't the smartest piece of LaTeX ever - in particular, I think a
% combination of pushQED/popQED with let/def temporary redefinition of
% qedsymbol might be better.  But whatever.
\makeatletter 
\newenvironment{proof*}[1][\proofname]{\par 
  %\pushQED{\qed}% 
  \normalfont \partopsep=\z@skip \topsep=\z@skip 
  \trivlist 
  \item[\hskip\labelsep 
        \itshape 
    #1\@addpunct{.}]\ignorespaces 
}{% 
  %\popQED%
  \endtrivlist\@endpefalse 
} 
\makeatother 

\pagestyle{plain}
\begin{document}

\title{Optimal Lower Bound for Itemset Frequency Indicator Sketches}

%\date{}
\author{Eric Price\\UT Austin}
\maketitle

\begin{abstract}
  Given a database, a common problem is to find the pairs or
  $k$-tuples of items that frequently co-occur.  One specific problem
  is to create a small space ``sketch'' of the data that records which
  $k$-tuples appear in more than an $\eps$ fraction of rows of the
  database.

  We improve the lower bound of Liberty, Mitzenmacher, and
  Thaler~\cite{LMT14}, showing that $\Omega(\frac{1}{\eps}d \log (\eps
  d))$ bits are necessary even in the case of $k=2$.  This matches the
  sampling upper bound for all $\eps \geq 1/d^{.99}$, and (in the case
  of $k=2$) another trivial upper bound for $\eps = 1/d$.
\end{abstract}

\define{thm:main}{Theorem}{ Any sketch for the
  Itemset-Frequency-Indicator problem must take
  $\Omega(\frac{1}{\eps}d \log (\eps d))$ space for all $1/d \leq \eps \leq
  1/8$, even in the case of $k = 2$.  }

\section{Introduction}

[Check out~\cite{LMT14} for a more complete introduction.]

We are concerned with sketches for itemset frequencies in databases.
The ``itemset frequency'' is the fraction of rows in a database where
a set of items co-occur:

\begin{definition}[Itemset Frequency]
  For a database $\mathcal{D} \in (\{0, 1\}^d)^n$ and an itemset $T
  \subseteq [d]$, the frequency of $T$ in $\mathcal{D}$ is
  \[
  f_T(\mathcal{D}) = \frac{1}{n} \abs{\{i : \forall j \in T, (\mathcal{D}_i)_j = 1 \}}
  \]
\end{definition}

An itemset frequency indicator sketch is a smaller space
representation of $\mathcal{D}$ that lets us identify the itemsets
with large frequency:

\begin{definition}[Itemset-Frequency-Indicator sketches]
  An \emph{Itemset-Frequency-Indicator sketching scheme} is a pair of
  algorithms: one receives $k, \eps$ and a database $\mathcal{D} \in
  (\{0, 1\}^d)^n$ and outputs a sketch $S \in \{0, 1\}^m$, and another
  takes $S$, $\eps$, and a set $T \subset [d]$ with $\abs{T} = k$, and
  returns an estimate of whether $f_T(\mathcal{D}) > \eps$.  In
  particular, it must output YES if
  \[
  f_T(\mathcal{D}) \geq \eps
  \]
  and NO if
  \[
  f_T(\mathcal{D}) \geq \eps/2.
  \]

  For this problem, we require that the first algorithm ``succeed''
  with $3/4$ probability, and if it does then the second algorithm
  should \emph{always} output the correct answer for every query $T$.
\end{definition}

The question is: how large must $m$ to solve this problem?  If we
allowed the queries to fail with a small constant probability, then
per~\cite{LMT14} the space complexity is $\Theta(d/\eps)$.  The goal
of this paper is to get an extra $\log d$ factor from needing to union
bound over $d^k$ queries.

There are two trivial upper bounds, for constant $k$:
\begin{itemize}
\item Sampling takes $O(\frac{1}{\eps} d \log d)$ bits of space.
\item Storing all the answers takes $O(d^k)$ bits of space.
\end{itemize}

We show that $\Omega(\frac{1}{\eps} d \log (\eps d))$ bits are
necessary even in the case of $k = 2$.  This means that sampling is
optimal for all $\eps \geq 1/d^{1 - \alpha}$ for any constant $\alpha
> 0$, while storing all answers is optimal for $\eps \leq 1/d$ and $k
= 2$.

\restate{thm:main}

For $k = 2$, in the relatively minor intermediate regime of $\eps =
1/d^{1 - o(1)}$, it seems likely that neither trivial upper bound is
quite tight.  For $k > 2$, one can probably extend the result to show
that sampling is optimal for $\eps > 1/d^{k-1-\alpha}$; we leave these
questions to future work.

A more interesting open question is for itemset frequency estimation.
If we want to estimate $f_T(\mathcal{D})$ to $\pm \eps$, then sampling
requires $O(\frac{1}{\eps^2} d \log d)$ space but we don't know any
better lower bound than the above $\Omega(\frac{1}{\eps}d \log d)$
bound.  (\cite{LMT14} first showed this for $1/d^{1-\alpha} \ll \eps
\ll 1/\log d$, and our Theorem~\ref{thm:main} removes the upper limit
on $\eps$).

To the best of our knowledge,~\cite{LMT14} contains the only previous
space lower bound for this type of problem.  A number of other aspects
of the problem have been studied, however; see~\cite{LMT14} for an
overview of related work.  Our theorem is a strict improvement over
their Theorem~18, which gets $\Omega(\frac{1}{\eps^{1 - 1/k}}d \log
d)$ for a restricted range of $\eps$.

\section{Notation}
We use $[n]$ to denote $\{1,2,\dotsc,n\}$.  For two vectors $v \in
\R^d$ and $w\in \R^{d'}$, we use $v \concat w$ to denote the $d + d'$
dimensional vector that is the concatenation of $v$ and $v'$.

\section{Proof}

For simplicity of exposition, we begin with the $\eps = \Theta(1)$
case, which was not previously known (\cite{LMT14} required $\eps \ll
1$).  The general $\eps$ case follows a very similar outline.

\begin{lemma}\label{l:consteps}
  Any sketch for the Itemset-Frequency-Indicator problem with $\eps =
  1/8$ must take $\Omega(d \log d)$ space.
\end{lemma}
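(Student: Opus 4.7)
The plan is to encode a uniformly random perfect matching $M$ on $[d]$ (taking $d$ even) into a database $\mathcal{D}_M$ whose pairs of frequency $\geq 1/8$ are exactly the edges of $M$. Since there are $(d-1)!!$ perfect matchings with $\log((d-1)!!) = \Omega(d\log d)$, a counting argument will then force the sketch to use $\Omega(d\log d)$ bits.

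The database will have $n = C\log d$ rows for a sufficiently large constant $C$. The key building block is a template system: for each unordered pair $S \in \binom{[d]}{2}$, a subset $T_S \subseteq [n]$ with $|T_S| = n/5$, such that $|T_S \cap T_{S'}| \leq n/16$ for all distinct $S, S'$. I will obtain one by the probabilistic method: drawing each $T_S$ as an independent uniform $(n/5)$-subset of $[n]$, the intersection of two such templates has mean $n/25$, so a Chernoff bound gives $P(|T_S \cap T_{S'}| > n/16) \leq e^{-\Omega(n)} = d^{-\Omega(C)}$, and a union bound over the $O(d^4)$ template pairs yields a good system when $C$ is large enough. Fix such a system.

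Given a matching $M$, let $\pi(u)$ denote the partner of $u$ in $M$. Build $\mathcal{D}_M$ as the $n \times d$ indicator matrix whose column $u$ is the characteristic vector of $T_{\{u, \pi(u)\}}$. For $\{u,v\} \in M$ the two columns coincide, so the pair's frequency is $|T_{\{u,v\}}|/n = 1/5 \geq 1/8$ (YES); for $\{u,v\} \notin M$ the unordered pairs $\{u, \pi(u)\}$ and $\{v, \pi(v)\}$ differ, so the template-intersection property gives frequency at most $1/16$ (NO). The YES/NO answers to the $\binom{d}{2}$ queries thus uniquely determine $M$.

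For the conclusion, consider the uniform distribution over perfect matchings $M$. By Yao's minimax there is a deterministic sketch that succeeds on at least a $3/4$ fraction of matchings; on each successful $M$ the sketch output determines $M$, so the sketch takes $\geq (3/4)(d-1)!!$ distinct values, giving sketch length $\geq \log((d-1)!!) - O(1) = \Omega(d\log d)$. The main step I expect to need care is the template construction: the slack between the mean intersection $n/25$ and the threshold $n/16$ is only a constant multiplicative factor, so the Chernoff bound must be tight enough to survive the $O(d^4)$-term union bound with only $n = O(\log d)$ rows. Everything else is routine bookkeeping.
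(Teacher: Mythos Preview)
Your argument is correct: the template system exists by the probabilistic/Chernoff calculation you outline (the $n/25$ vs.\ $n/16$ gap is a fixed constant factor, so $n=C\log d$ with a sufficiently large absolute constant $C$ beats the $O(d^4)$-term union bound), the resulting database separates matched pairs (frequency $1/5$) from non-matched pairs (frequency $\le 1/16$) exactly at the $\eps=1/8$, $\eps/2=1/16$ thresholds, and the Yao/counting finish is standard.

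The paper's proof follows the same high-level plan---encode an object with $\Omega(d\log d)$ bits of entropy into a $\Theta(\log d)$-row database and recover it from the pair-frequency answers---but the concrete construction differs. The paper splits $[d]$ into two halves, encodes a \emph{permutation} $\Pi$ of $[d/2]$, and takes each row to be $v_S=\mathbf{1}_S\concat \mathbf{1}_{\Pi(\overline S)}$ for a uniformly random $S\subseteq[d/2]$; then the pair $\{i,d/2+j\}$ has frequency $0$ when $j=\Pi(i)$ and (w.h.p.) frequency $\ge 1/8$ otherwise, so matched pairs are the \emph{low}-frequency ones rather than the high-frequency ones. Your approach front-loads the randomness into a fixed combinatorial gadget (the low-intersection template family) and then builds the database deterministically from $M$, which is arguably cleaner and makes the decoding step immediate. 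The paper's bipartite-permutation construction, on the other hand, generalizes directly to the $\eps<1/8$ case by tiling $1/\eps$ copies of the construction into blocks (this is exactly how Theorem~\ref{thm:main} is proved); extending your matching/template approach to that regime would require reworking the gadget.
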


\begin{proof}
  Let $m = d/2$.  We will encode an arbitrary permutation $\Pi$ of
  $[m]$ into the results of Itemset-Frequency-Indicator.  This forces
  Itemset-Frequency-Indicator to store at least $\log (m!) = \Theta(m
  \log m) = \Theta(d \log d)$ bits.

  For each $i$, define $e_i \in \{0, 1\}^m$ to be the elementary unit
  vector with a $1$ in position $i$.  
  Given a subset $S$ of $[m]$, we associate a vector
  \[
  v_S := (\sum_{i \in S} e_i) \concat (\sum_{i \in \overline{S}} e_{\Pi(i)})
  \]
  where $\concat$ denotes concatenation and $\overline{S} = [m]
  \setminus S$.

  Our database simply consists of $n = \Theta(\log d)$ vectors $v_S$
  for independent, randomly chosen $S$.  In particular, each $S$
  contains each element of $[m]$ with probability $1/2$.

  Now, for each row $v_S$ and any $i, j \in [m]$ consider the
  distribution on the co-occurence of the itemset $\{i, m + j\}$.  If
  $j = \Pi(i)$, this conjunction never appears.  If $j \neq \Pi(i)$,
  on the other hand, then the conjunction appears with $1/4$
  probability.

  After looking at $n = \Theta(\log d)$ such vectors, with high
  probability all itemsets $\{i, m + j\}$ with $j \neq \Pi(i)$ will
  have more than $n/8$ appearances.  Then $f_{\{i, m +
    j\}}(\mathcal{D})$ will be $0$ if $j = \Pi(i)$ and at least $1/8$
  if $j \neq \Pi(i)$.  Therefore an $\eps=1/8$
  Itemset-Frequency-Indicator algorithm will return NO for $\{i, m + j\}$
  precisely when $j = \Pi(i)$, so we can recover $\Pi$ from the
  sketch.  Hence the sketch must have $\Omega(d \log d)$ bits.
\end{proof}

We now extend this approach to general $\eps$ with $1/d \leq \eps \leq
1$.

\state{thm:main}

\begin{proof}[Proof of Theorem~\ref{thm:main}]
  Let $m = \eps d / 2$, which we can assume is an integer by rescaling
  constants.  We will encode $1/\eps^2$ permutations $\Pi_{k, l}$ of
  $[m]$, for $k, l \in [1/\eps]$.  This requires $(1/\eps^2) \log
  ((\eps d/2)!) = \Theta(\frac{1}{\eps}d \log (\eps d))$ bits, giving
  the result.

  Let $e_i \in \{0, 1\}^m$ denote the elementary unit vector with a
  $1$ in position $i$.  For any $S \subset [m]$ and $k \in [1/\eps]$,
  we first define $u^{k, S} \in \{0, 1\}^{d/2}$ by
  \[
  u^{k,S}_i = 1 \text{ if and only if } i = (k-1)m + j \text{ for some }j \in S
  \]
  to represent the set $S$ in ``block'' $k$.  We then define the
  associated vector $v_{k,S} \in \{0, 1\}^d$ by
    \[
  v_{k,S} := u^{k, S}
 \concat (\sum_{i \in \overline{S}} e_{\Pi_{k, 1}(i)})
 \concat (\sum_{i \in \overline{S}} e_{\Pi_{k, 2}(i)})
 \concat \dotsb
 \concat (\sum_{i \in \overline{S}} e_{\Pi_{k, 1/\eps}(i)}).
  \]
  
  We then choose $n = \Theta(\frac{1}{\eps}\log d)$ vectors for the
  database by, for each $k \in [1/\eps]$, choosing $\Theta(\log d)$
  $v_{k,S}$ for uniformly random $S \subseteq [m]$.

  Given the database, to figure out $\Pi_{k, l}(i)$ we query the
  itemset $T_{k, l}(i, j) = \{(k-1)m + i, d/2 + (l-1)m + j\}$ for all
  $j \in [m]$.  We have that $T_{k, l}(i, j)$ appears in $v_{k', S}$
  exactly when $k' = k$ with $i \in S$ and $\Pi_{k,l}^{-1}(j) \notin
  S$.  Thus it never appears if $j = \Pi_{k, l}(i)$, but otherwise it
  appears in each sampled $v_{k, S}$ with probability $1/4$.  Thus
  with high probability, it will appear in at least $\eps n / 8$ of
  the rows.  By a union bound, with high probability $f_{T_{k, l}(i,
    j)}(\mathcal{D}) \geq \eps/8$ for all $i, j, k, l$ with $j \neq
  \Pi_{k, l}(i)$, while it is zero when $j = \Pi_{k,l}(i)$.  Hence an
  $\eps/8$-approximate solution to Itemset-Frequency-Indicator would
  let us recover all the $\Pi_{k, l}$ with high probability,
  retrieving $\Theta(\frac{d}{\eps} \log(\eps d))$ bits of
  information.  Therefore the sketch must store this many bits.
\end{proof}

\bibliography{freqset}
\bibliographystyle{alpha}

\end{document}